\theoremstyle{bkaexa} 
\theoremstyle{bkaexa} 
\newtheorem{Rem}{Remark}
\theoremstyle{bkathm} 
\theoremstyle{bkathm} 
\newtheorem{Thm}{Theorem}
\theoremstyle{bkathm} 
\theoremstyle{bkathm} 
\newtheorem{Lem}{Lemma}
\theoremstyle{definition}
\begin{document}
\setstretch{1.5}
\title{On relationships between Chatterjee's and Spearman's correlation coefficients}
\author{\normalsize Qingyang Zhang\\
\normalsize Department of Mathematical Sciences, University of Arkansas, AR 72701\\
\normalsize Email: qz008@uark.edu
}
\date{}
\maketitle

\begin{abstract}
In his seminal work, Chatterjee (2021) introduced a novel correlation measure which is distribution-free, asymptotically normal, and consistent against all alternatives. In this paper, we study the probabilistic relationships between Chatterjee's correlation and the widely used Spearman's correlation. We show that, under independence, the two sample-based correlations are asymptotically joint normal and asymptotically independent. Under dependence, the magnitudes of two correlations can be substantially different. We establish some extremal cases featuring large differences between these two correlations. Motivated by these findings, a new independence test is proposed by combining Chatterjee's and Spearman's correlations into a maximal strength measure of variable association. Our simulation study and real data application show the good sensitivity of the new test to different correlation patterns.
\end{abstract}

\noindent\textbf{Keywords}: Chatterjee's correlation; Spearman's correlation; asymptotic joint normality

\section{Introduction}
Measuring and testing the dependence between two continuous variables is a durable research topic in statistics. Two classical and arguably most widely used dependence measures are Pearson's correlation and Spearman's correlation. Pearson's correlation is powerful in detecting linear dependence, especially when the two variables are bivariate normal. Spearman's correlation is a nonparametric alternative to Pearson's. It is sensitive to monotonic relations, and generally robust to outliers since it is rank-based. Under the null hypothesis of independence, the two sample-based correlations are both asymptotically normal, making it easy to calculate p-values. However, the common drawback of these methods is that they generally fail to detect non-monotonic relationships. 

In the past decades, there have been numerous tests developed that are consistent against all alternatives, including the kernel-based test \cite{hsic}, distance correlation test \cite{dcor}, sign covariance test \cite{BD}, copula-based test \cite{SW}, graph-based test \cite{graph} and maximal information test \cite{mic}, among many others. For a recent survey, see Josse \& Holmes (2016) \cite{survey}. Some of these tests are popular among practitioners, e.g., the distance correlation test. However, one major bottleneck of these tests is the testing process: because there is a lack of simple asymptotic theory that facilitate analytical computation of p-values, an expensive permutation test is typically required. For instance, the asymptotic null distribution of distance correlation is difficult to derive because it depends on the underlying distributions of random variables, and the standard approach is to approximate the null distribution of distance covariance via permutation, which requires a time complexity of $O(Rn^2)$, where $R$ is the number of permutations and $n$ is the sample size. 

Recently, Chatterjee (2021) introduced a rank-based correlation test which is also consistent against all alternatives \cite{chatterjee}. Different from the aforementioned tests, Chatterjee's correlation is asymptotically normal under independence, facilitating quick computation of p-values. Due to its nice properties, Chatterjee's correlation has attracted much attention over the past two years. We begin with a brief review of this method and related literature. Let $X$ and $Y$ be two continuous variables, and $(X_{i}, Y_{i})_{i=1,...,n}$ be n $i.i.d.$ samples of $(X, Y)$. Assuming that $X_{i}$'s and $Y_{i}$'s have no ties, the data can be uniquely arranged as $(X_{(1)}, Y_{(1)}), ..., (X_{(n)}, Y_{(n)})$, such that $X_{(1)}<\cdots<X_{(n)}$. Here $Y_{(1)}, ..., Y_{(n)}$ denote the concomitants. Let $R_{i}$ be the rank of $Y_{(i)}$, i.e., $R_{i} = \sum_{k=1}^{n}\mathbbm{1}\{Y_{(k)}\leq Y_{(i)}\}$, Chatterjee's correlation $\xi_{n}(X, Y)$ is defined as
\begin{equation}
\xi_{n}(X, Y) = 1-\frac{3\sum_{i=1}^{n-1}|R_{i+1}-R_{i}|}{n^2-1}.
\end{equation}
 
The asymptotic behavior of $\xi_{n}(X, Y)$ and related problems have been examined in recent papers. Here we outline a few of them which are most relevant to this work. In his original paper \cite{chatterjee}, Chatterjee showed that 
\begin{equation*}
\xi_{n}(X, Y)\rightarrow \xi(X, Y) = \frac{\int V(E(\mathbbm{1}\{Y\geq t|X\}))dF_{Y}(t)}{\int V(\mathbbm{1}\{Y\geq t\})dF_{Y}(t)},  ~\mbox{as}~n\rightarrow\infty.
\end{equation*}
The limiting quantity $\xi(X, Y)$ is also known as Dette-Siburg-Stoimenov's dependence measure \cite{dss}, which is between 0 and 1 (0 if and only if $X$ and $Y$ are independent, 1 if and only if $Y$ is a measurable function of $X$). Chatterjee (2021) also established the asymptotic normality of $\xi_{n}(X, Y)$ under independence. Precisely, $\sqrt{n}\xi_{n}(X, Y)\xrightarrow[]{d} N(0, 2/5), ~\mbox{as}~n\rightarrow\infty$. The central limit theorem of $\xi_{n}(X, Y)$ under dependence (as long as $Y$ is not a measurable function of $X$) is recently proved by Lin \& Han (2022) \cite{LinHan22}. In addition, Auddy et al. (2021) investigated the limiting power of $\xi_{n}(X, Y)$ under local alternatives and obtained the exact detection threshold \cite{auddy}. The fast growing literature on Chatterjee's correlation also include Shi et al. (2021a), Shi et al. (2021b), Lin \& Han (2021), Cao \& Bickel (2020), Deb et al. (2020), Han \& Huang (2022), Azadkia \& Chatterjee (2021), Zhang (2023), Chatterjee \& Vidyasagar (2022), among many others.

In addition to the simple asymptotic theory, as an empirical finding, Chatterjee's test is powerful to detect non-monotonic associations, especially those with oscillating nature such as the W-shaped scatterplot and the sinusoid \cite{chatterjee}. The only disadvantage of Chatterjee's test is that it may have less statistical power for smoother alternatives (such as linear or other monotonic relationships), compared to other popular tests including distance correlation test and Bergsma-Dassios test. For instance, as shown in Figure 5 of Chatterjee (2021), the power of $\xi_{n}(X, Y)$ quickly deteriorates as the noise level increases, which could be a matter of concern in practice. Motivated by these facts, we propose a versatile test by taking the maximum of Chatterjee's correlation and Spearman's correlation, where the the latter one is powerful to detect monotonic and smoother associations. Two questions arising from this proposal are 

\begin{itemize}
\item[(1)] What is the asymptotic joint distribution of the two correlations under independence? 
\item[(2)] How much can they differ as a measure of dependence? 
\end{itemize}

The first question is about analytical calculation of p-values. The second question investigates if the two correlations to be combined are complementary, in the sense that they measure different dependencies. In this paper, we give a complete answer to the first question. For the second question, we provide two extremal examples featuring large differences between the two correlations. The idea of combining two complementary correlation metrics is not new. For instance, Zhang, Qi \& Ma (2011) showed the asymptotic independence between Pearson's correlation and a quotient-type correlation, and proposed a new test by combining them into a maximal type measure \cite{ZhangQi2011}.

The remainder of this paper is structured as follows: Section 2 derives the asymptotic joint distribution of $S_{n}(X, Y)$ and $\xi_{n}(X, Y)$ under independence. Section 3 investigates how much the two correlations can differ under dependence. Section 4 proposes the new test of independence, validated by both synthetic data and a real-world dataset. Section 5 discusses the paper with some future perspectives.

\section{Asymptotic joint distribution under independence}
With the same notations in previous section, Spearman's rank correlation can be written as 
\begin{equation*}
S_{n}(X, Y) = 1-\frac{6\sum_{i=1}^{n}(i-R_{i})^2}{n(n^2-1)},
\end{equation*}
where $R_{i}$ represents the rank of $Y_{(i)}, i = 1, ..., n$. Under the hypothesis of independence, it is well known that $E(\sqrt{n}S_{n}(X, Y)) = 0$, $V(\sqrt{n}S_{n}(X, Y)) = n/(n-1)$, and $\sqrt{n}S_{n}(X, Y)\xrightarrow[]{d} N(0, 1), ~\mbox{as}~n\rightarrow\infty$. Though $\xi_{n}(X, Y)$ and $S_{n}(X, Y)$ are both asymptotically normal, their joint behavior remains unexplored. In this section, we derive the asymptotic joint distribution of $\xi_{n}(X, Y)$ and $S_{n}(X, Y)$ under independence. 

Let $[n] := \{1, 2, ..., n\}$ be the sample indices. Under independence, $\{R_{1}, ..., R_{n}\}$ is a random permutation of $[n]$. We first show that $\xi_{n}(X, Y)$ and $S_{n}(X, Y)$ are uncorrelated for finite sample, as stated in the following lemma:
\begin{Lem}
If $X$ and $Y$ are independent, we have
\begin{equation*}
\mbox{Cov}\left[ S_{n}(X, Y), \xi_{n}(X, Y) \right ] = 0,
\end{equation*}
for any $n\geq 2$. 
\end{Lem}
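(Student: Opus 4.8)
The plan is to reduce the statement to a single scalar covariance and then exploit a distribution-preserving symmetry of random permutations. Since both correlations are affine functions of the ranks, I would set $A := \sum_{i=1}^{n-1}|R_{i+1}-R_i|$ and $B := \sum_{i=1}^{n}(i-R_i)^2$, so that $\xi_n = 1-\frac{3}{n^2-1}A$ and $S_n = 1-\frac{6}{n(n^2-1)}B$. The additive constants and deterministic prefactors drop out of the covariance, so it suffices to prove $\mathrm{Cov}[A,B]=0$, i.e. $E[AB]=E[A]\,E[B]$.

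First I would introduce the value-reversal map $R\mapsto R'$ defined by $R_i' := n+1-R_i$. Under independence $(R_1,\ldots,R_n)$ is a uniform random permutation of $[n]$, and this map is a measure-preserving involution of the symmetric group, so $(R_1',\ldots,R_n')$ has the same law as $(R_1,\ldots,R_n)$. The key is to track how $A$ and $B$ transform pointwise. The Chatterjee numerator is invariant, since $|R_{i+1}'-R_i'|=|R_{i+1}-R_i|$ and hence $A'=A$.

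Next I would compute the image of the Spearman numerator. Expanding $B' = \sum_i (i-R_i')^2 = \sum_i (i+R_i-(n+1))^2$ and using $\sum_i i = \sum_i R_i = n(n+1)/2$ together with $\sum_i R_i^2=\sum_i i^2$, the only random piece $\sum_i i R_i$ enters $B$ and $B'$ with opposite signs and therefore cancels in the sum, leaving $B+B' = 4\sum_i i^2 - n(n+1)^2 = \frac{n(n^2-1)}{3} =: C$, a constant. This is consistent with the known value $E[B]=C/2$ (equivalently $E[S_n]=0$). This is the only step requiring an explicit calculation, and it is the main thing to get right, though it is routine once the symmetry is in place.

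Finally, I would combine the two facts. Applying the distributional invariance to the product $AB$ gives $E[AB]=E[A'B']=E\!\left[A(C-B)\right]=C\,E[A]-E[AB]$, whence $2E[AB]=C\,E[A]$ and $E[AB]=\tfrac{C}{2}E[A]=E[A]\,E[B]$. Thus $\mathrm{Cov}[A,B]=0$, and rescaling yields $\mathrm{Cov}[S_n,\xi_n]=0$ for every $n\ge 2$. The main obstacle here is conceptual rather than computational: one must spot that value-reversal fixes the Chatterjee numerator while reflecting the Spearman numerator about its mean; once this symmetrization is identified, the remainder is a short one-line argument.
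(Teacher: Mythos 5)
Your proof is correct, and it takes a genuinely different route from the paper. The paper proceeds by direct computation: it expands $\mathrm{Cov}\bigl[\sum_{i}|R_{i+1}-R_i|,\ \sum_j (j/n)R_j\bigr]$ into three double sums using the identity $|R_{i+1}-R_i| = R_{i+1}+R_i-2\min(R_{i+1},R_i)$, and then evaluates each sum via explicit moment formulas for ranks ($\mathrm{Cov}[R_1,R_2]$, $\mathrm{V}[R_1]$, $\mathrm{Cov}[R_1,\min(R_1,R_2)]$, $\mathrm{Cov}[R_1,\min(R_2,R_3)]$, borrowed from Lin \& Han), showing the three contributions cancel. You instead exploit the value-reversal involution $R_i \mapsto n+1-R_i$, which preserves the uniform law on permutations, fixes the Chatterjee numerator $A$ pointwise, and sends the Spearman numerator $B$ to $C-B$ with $C = n(n^2-1)/3$; your algebra here checks out, since $B+B' = 4\sum_i i^2 - n(n+1)^2 = n(n^2-1)/3$, and the one-line conclusion $E[AB] = CE[A]-E[AB]$ follows. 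Your argument is shorter, requires no rank-moment computations at all, and in fact proves something strictly stronger than the lemma: since $(A,B)$ and $(A,C-B)$ are equal in law, you get $E[S_n \mid \xi_n] = 0$ almost surely, hence $S_n$ is uncorrelated with \emph{every} function of $\xi_n$ at every finite $n$ --- which is nicely consistent with the paper's Remark 1, where the dependence is exhibited through $\mathrm{Cov}[|S_3|,\xi_3] \neq 0$, i.e., through a function of $S_n$ rather than of $\xi_n$. What the paper's computational approach buys in exchange is the explicit catalogue of rank covariances, which is the same toolkit it leans on elsewhere; your symmetry argument is self-contained and arguably the more illuminating proof of this particular lemma.
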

\begin{proof}
Spearman's correlation can be rewritten as
$$S_{n}(X, Y) = -\frac{3(n+1)}{n-1}+\frac{12\sum_{i=1}^{n}iR_{i}/n}{(n^2-1)}.$$
For the covariance between $\xi_{n}(X, Y)$ and $S_{n}(X, Y)$, we have
\begin{align}
& \mbox{Cov}\left[\sum_{i=1}^{n-1}|R_{i+1}-R_{i}|, \sum_{j=1}^{n}\frac{j}{n}R_{j}\right]  \nonumber\\
= & \sum_{i=1}^{n-1}\sum_{j=1}^{n}\mbox{Cov}\left[R_{i+1}+R_{i}-2\min(R_{i+1}, R_{i}), \frac{j}{n}R_{j}\right] \nonumber\\
= & \sum_{i=1}^{n-1}\sum_{j=1}^{n}\frac{j}{n} \mbox{Cov}\left[R_{i+1}, R_{j}\right] + \sum_{i=1}^{n-1}\sum_{j=1}^{n}\frac{j}{n} \mbox{Cov}\left[R_{i}, R_{j}\right] - 2\sum_{i=1}^{n-1}\sum_{j=1}^{n}\frac{j}{n} \mbox{Cov}\left[\min(R_{i+1}, R_{i}), R_{j}\right]. 
\end{align}
The following results (\cite{LinHan}, Lemma 6.1, page 13) are needed for our derivations
\begin{align*}
\mbox{Cov}[R_{1}, R_{2}] & = -\frac{n+1}{12} \\
\mbox{V}[R_{1}] & = \frac{(n-1)(n+1)}{12} \\
\mbox{Cov}[R_{1}, \min(R_{1}, R_{2})] & = \frac{(n+1)(n-2)}{24} \\
\mbox{Cov}[R_{1}, \min(R_{2}, R_{3})] & = -\frac{n+1}{12}. 
\end{align*}
For the first term in Equation (2), we have
\begin{align*}
\sum_{i=1}^{n-1}\sum_{j=1}^{n} \frac{j}{n} \mbox{Cov}[R_{i+1}, R_{j}] & = \sum_{i=1}^{n-1} \left\{ \frac{i+1}{n}\mbox{V}[R_{1}] + \sum_{j\neq i+1}\frac{j}{n}\mbox{Cov}[R_{1}, R_{2}] \right\}\\
& = \sum_{i=1}^{n-1} \left\{ \frac{(n+1)(i+1)}{12} - \frac{(n+1)^2}{24} \right\} \\
& = \frac{(n+1)(n-1)}{24}
\end{align*}
Similarly for the second term, we have
$$\sum_{i=1}^{n-1}\sum_{j=1}^{n}\frac{j}{n} \mbox{Cov}\left[R_{i}, R_{j}\right] = -\frac{(n+1)(n-1)}{24}.$$
For the third term, we have 
\begin{align*}
\sum_{i=1}^{n-1}\sum_{j=1}^{n}\frac{j}{n} \mbox{Cov}\left[\min(R_{i+1}, R_{i}), R_{j}\right] & = \sum_{i=1}^{n-1} \left\{ \frac{2i+1}{n}\mbox{Cov}[R_{1}, \min(R_{1}, R_{2})] + \sum_{j\neq i, i+1}\frac{j}{n}\mbox{Cov}[R_{1}, \min(R_{2}, R_{3})] \right\}\\
& = \sum_{i=1}^{n-1} \left\{ \frac{(2i+1)(n+1)(n-2)}{24n} - \frac{(n+1)^2}{24} + \frac{(2i+1)(n+1)}{12n} \right\} \\
& = 0.
\end{align*}
Therefore $\mbox{Cov}\left[ S_{n}(X, Y), \xi_{n}(X, Y) \right ] = 0$. This completes the proof of Lemma 1.
\end{proof}

\begin{Rem}
It is noteworthy that Lemma 1 only indicates the uncorrelatedness between $S_{n}(X, Y)$ and $\xi_{n}(X, Y)$. In fact, under finite sample, $S_{n}(X, Y)$ and $\xi_{n}(X, Y)$ are generally dependent. A simple example is given in Table 1, where $n=3$ and $\mbox{Cov}\left[ |S_{3}(X, Y)|, \xi_{3}(X, Y) \right ] = 1/24$.

\begin{table}[!htbp]
\caption{A special case when $n=3$}
\centering
\begin{tabular}{l r r r r}
\hline\hline
($R_{1}$, $R_{2}$, $R_{3}$) & $\xi_{3}(X, Y)$ & $S_{3}(X, Y)$ & $|S_{3}(X, Y)|$\\ [0.5ex]
\hline
(1, 2, 3) & 1/4 & 1 & 1 \\
(1, 3, 2) & -1/8 & 1/2 & 1/2  \\
(2, 1, 3) & -1/8 & 1/2 & 1/2  \\
(2, 3, 1) & -1/8 & -1/2 & 1/2 \\
(3, 1, 2) & -1/8 & -1/2 & 1/2 \\
(3, 2, 1) & 1/4 & -1 & 1 \\ [1ex]
\hline
\end{tabular}
\end{table}
\end{Rem}

Next we present a lemma that establishes the central limit theorem for $\{S_{n}(X, Y), \xi_{n}(X, Y)\}$. The key steps to prove Lemma 2 include (1) the coupling method for permutation oscillation proposed by \cite{angus} (2) the central limit theorem for m-dependent sequence and (3) Cra\'mer-Wold device. The detailed proof is a bit lengthy, and we provide it in the Appendix. 

\begin{Lem}
If $X$ and $Y$ are independent, $\sqrt{n}S_{n}(X, Y)$ and $\sqrt{n}\xi_{n}(X, Y)$ are asymptotically joint normal.
\end{Lem}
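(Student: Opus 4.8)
The plan is to deduce joint asymptotic normality from the two one‑dimensional statements via the Cram\'er--Wold device: it suffices to show that for every fixed $(a,b)\in\mathbb{R}^2$ the scalar sequence $a\sqrt{n}S_{n}(X,Y)+b\sqrt{n}\xi_{n}(X,Y)$ converges in distribution to a univariate normal law. Since under independence the concomitants $Y_{(1)},\dots,Y_{(n)}$ are i.i.d.\ with law $F_{Y}$, the transformed variables $U_{i}:=F_{Y}(Y_{(i)})$ are i.i.d.\ Uniform$[0,1]$, and $R_{i}$ is precisely the rank of $U_{i}$ among $U_{1},\dots,U_{n}$; the whole linear combination is therefore a functional of a single uniform random permutation generated by $U_{1},\dots,U_{n}$. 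I would keep in mind two structural facts: after centering, $\sqrt{n}S_{n}(X,Y)$ is, up to a deterministic scaling, the linear rank statistic $\sum_{i=1}^{n}i\big(R_{i}-\tfrac{n+1}{2}\big)$, while $\sqrt{n}\xi_{n}(X,Y)$ is, up to centering and a deterministic scaling, the permutation‑oscillation statistic $\sum_{i=1}^{n-1}\big(|R_{i+1}-R_{i}|-\tfrac{n+1}{3}\big)$.

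The core step is a coupling. Following Angus, I would realize the permutation through the i.i.d.\ uniforms $U_{1},\dots,U_{n}$ and approximate the oscillation statistic by a functional of neighbouring uniforms, reducing $\sum_{i=1}^{n-1}|R_{i+1}-R_{i}|$ to a sum $\sum_{i} g(U_{i},U_{i+1})$ whose normalized difference from the original statistic is $o_{P}(1)$. The linear (Spearman) part embeds in the same coupled representation as a position‑weighted function of $U_{i}$ alone. The combined object then takes the form $\sum_{i}W_{n,i}$, where each $W_{n,i}$ is a function of the pair $(U_{i},U_{i+1})$; this is a non‑stationary (owing to the weights $i/n$ in the Spearman term), $1$‑dependent, uniformly bounded triangular array.

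I would then invoke a central limit theorem for $m$‑dependent triangular arrays. The Lyapunov/Lindeberg condition is immediate because all summands are bounded (the $U_{i}$ live in $[0,1]$), so only the convergence of the variance of $\sum_{i}W_{n,i}$ must be checked; its limit can be read off directly from the two exact marginal variances, namely $n/(n-1)\to 1$ for $\sqrt{n}S_{n}(X,Y)$ and $2/5$ for $\sqrt{n}\xi_{n}(X,Y)$, together with Lemma~1, which forces the cross term to vanish for every $n$. Hence each linear combination converges to $N\big(0,\,a^{2}+\tfrac{2}{5}b^{2}\big)$, and the Cram\'er--Wold device yields that $(\sqrt{n}S_{n}(X,Y),\sqrt{n}\xi_{n}(X,Y))$ is asymptotically bivariate normal with the diagonal covariance $\mathrm{diag}(1,2/5)$, i.e.\ asymptotically independent.

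The main obstacle is the coupling error control for the oscillation term. Because $|R_{i+1}-R_{i}|$ is a non‑smooth functional and the ranks are globally constrained by the permutation structure, the crude empirical‑process substitution $R_{i}\approx nU_{i}$ does not even reproduce the correct limiting variance (the sampling‑without‑replacement dependence among the ranks deflates it); one must instead use Angus's sharper coupling to guarantee that the difference between the permutation statistic and its $1$‑dependent surrogate is negligible at the $\sqrt{n}$ scale. By contrast, the Spearman contribution is a classical linear rank statistic and is comparatively routine, so the delicacy is concentrated entirely in controlling the oscillation term inside the coupled sum.
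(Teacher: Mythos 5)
Your proposal follows essentially the same route as the paper's own proof: under independence the paper likewise passes to the i.i.d.\ uniforms $U_{i}=F(Y_{(i)})$, uses Angus's coupling to replace the oscillation statistic by a sum of functions of $(U_{i},U_{i+1})$ up to an $o_{P}(1)$ error, absorbs the Spearman part (reduced to the weighted sum $\sum_{i}[2i/(n+1)-1]U_{i}/\sqrt{n}$) into the same $1$-dependent sequence $Z_{i}$ for an arbitrary linear combination $(a,b)$, applies Angus's CLT for $m$-dependent variables, and concludes by Cram\'er--Wold. The only real difference in execution is that the paper does not ``read off'' the limiting variance from the marginal variances and Lemma~1 as you suggest --- a step that would need $L^{2}$ control of the coupling error, since $o_{P}(1)$ alone does not transfer variances --- but instead computes $\mathrm{V}[Z_{i}]$ and $\mathrm{Cov}[Z_{i},Z_{i+1}]$ of the surrogate exactly and verifies the variance-growth condition $\sqrt{\mathrm{V}[\sum_{i}Z_{i}]}/n^{1/3}\rightarrow\infty$ required by Angus's theorem, which is the safer (and essentially what you would end up doing) version of your variance step.
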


By Lemmas 1 and 2, our main theorem follows immediately

\begin{Thm}
If $X$ and $Y$ are independent, $\sqrt{n}S_{n}(X, Y)$ and $\sqrt{n}\xi_{n}(X, Y)$ are asymptotically joint normal and asymptotically independent. To be specific,
$$\begin{bmatrix}
\sqrt{n}S_{n}(X, Y)\\
\sqrt{n}\xi_{n}(X, Y) 
\end{bmatrix}
\xrightarrow{d} N\left [
\begin{pmatrix}
0\\
0 
\end{pmatrix}, 
\begin{pmatrix}
1 & 0\\
0 & 2/5
\end{pmatrix} \right ]$$ as $n\rightarrow\infty$.
\end{Thm}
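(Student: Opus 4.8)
The plan is to read off the limiting law directly from the two lemmas, treating the marginal central limit theorems as known inputs. By Lemma 2 the vector $(\sqrt{n}S_{n}(X,Y), \sqrt{n}\xi_{n}(X,Y))$ converges in distribution to some bivariate normal, so the only work is to identify its mean vector and covariance matrix. The mean vector is $(0,0)$ because $E[\sqrt{n}S_{n}(X,Y)]=0$ exactly and $E[\sqrt{n}\xi_{n}(X,Y)]\to 0$. The two diagonal entries of the covariance matrix are the marginal asymptotic variances, which are already available: $\sqrt{n}S_{n}(X,Y)\xrightarrow{d}N(0,1)$ from the classical Spearman theory and $\sqrt{n}\xi_{n}(X,Y)\xrightarrow{d}N(0,2/5)$ from Chatterjee (2021). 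Thus everything reduces to pinning down the off-diagonal entry.

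For the off-diagonal entry I would work through the Cram\'er--Wold device that already underlies Lemma 2. Fix $a,b\in\mathbb{R}$ and consider the linear combination $a\sqrt{n}S_{n}(X,Y)+b\sqrt{n}\xi_{n}(X,Y)$. Its exact finite-sample variance is
\begin{equation*}
a^{2}\,\mbox{V}[\sqrt{n}S_{n}(X,Y)] + b^{2}\,\mbox{V}[\sqrt{n}\xi_{n}(X,Y)] + 2ab\,n\,\mbox{Cov}[S_{n}(X,Y),\xi_{n}(X,Y)],
\end{equation*}
and the cross term vanishes identically for every $n\ge 2$ by Lemma 1. Since $\mbox{V}[\sqrt{n}S_{n}(X,Y)]=n/(n-1)\to 1$ and $\mbox{V}[\sqrt{n}\xi_{n}(X,Y)]\to 2/5$, the finite-sample variance of the linear combination converges to $a^{2}+(2/5)b^{2}$. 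The $m$-dependence central limit theorem used in Lemma 2 identifies the limiting variance of the normalized sum with exactly this limit of finite-sample variances, so the limit law of $a\sqrt{n}S_{n}(X,Y)+b\sqrt{n}\xi_{n}(X,Y)$ is $N(0, a^{2}+(2/5)b^{2})$ for every $a,b$. Matching this against $N(0, a^{2}\Sigma_{11}+b^{2}\Sigma_{22}+2ab\,\Sigma_{12})$ forces $\Sigma_{12}=0$, whence $\Sigma=\mathrm{diag}(1,2/5)$.

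Once the limiting covariance matrix is shown to be $\mathrm{diag}(1,2/5)$, independence is automatic: for a bivariate Gaussian vector a zero off-diagonal covariance is equivalent to independence of the two coordinates. This yields both the stated joint normality and the asymptotic independence.

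The one point that needs care---and the place where the argument could slip if stated too glibly---is the passage from the exact finite-sample uncorrelatedness of Lemma 1 to the vanishing of the limiting covariance $\Sigma_{12}$. Weak convergence alone does not transmit covariances to the limit, so I would not simply assert $\Sigma_{12}=\lim_{n}\mbox{Cov}[\sqrt{n}S_{n}(X,Y),\sqrt{n}\xi_{n}(X,Y)]$. Routing the computation through the variance of linear combinations sidesteps this: the limiting variance of each linear combination is delivered directly by the $m$-dependence CLT of Lemma 2 as the limit of the finite-sample variances, and within each such variance the cross term is exactly zero by Lemma 1. The Cram\'er--Wold device then recovers the full covariance matrix from these one-dimensional limits without ever invoking moment convergence of the product $\sqrt{n}S_{n}(X,Y)\cdot\sqrt{n}\xi_{n}(X,Y)$, which is the step I expect to be the genuine obstacle.
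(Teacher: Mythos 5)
Your proposal is correct and takes essentially the same route as the paper, whose entire proof of Theorem 1 is the one-line observation that it follows from Lemma 1 (exact finite-sample uncorrelatedness) together with Lemma 2 (joint normality via the Cram\'er--Wold device and the $m$-dependent CLT). If anything you are more careful than the paper: the covariance-passage subtlety you flag is glossed over by the paper's ``follows immediately,'' and is implicitly resolved there only because the variance computation for $Z_i$ in the appendix proof of Lemma 2 contains no $ab$ cross term, which yields the zero off-diagonal entry directly---the same conclusion your variance-of-linear-combinations bookkeeping reaches via Lemma 1.
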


Theorem 1 answers the first question that we asked in Section 1, which enables analytical calculation of p-values for the proposed integrated test (to be further discussed in Section 4). Theorem 1, together with Lemma 1 and Remark 1, give a complete characterization for the joint behavior of $S_{n}(X, Y)$ and $\xi_{n}(X, Y)$ under independence. The convergence of $\{S_{n}(X, Y), \xi_{n}(X, Y)\}$ to joint normality, as sample size $n$ increases, is illustrated in the Figure 1 below. 

\begin{center}
[Figure 1 about here]
\end{center}

\section{Chatterjee's and Spearman's correlations - how much can they differ?}
In this section, we explore the second question outlined in the Section 1, i.e., how much the two correlations can differ as a metric of dependence. We focus on some extremal cases where the magnitudes of $\xi_{n}(X, Y)$ and $S_{n}(X, Y)$ are largely different. For dependent variables $X$ and $Y$, $\xi_{n}(X, Y)$ is generally, though not always, between 0 and 1, while $S_{n}(X, Y)$ is between -1 and 1, therefore we take the absolute value of $S_{n}(X, Y)$, and compare $\xi_{n}(X, Y)$ and $|S_{n}(X, Y)|$ instead. We first provide an extremal case where the absolute Spearman's correlation is small but Chatterjee's correlation is large. This extremal case is easy to construct using simple symmetric patterns such as $Y=|X|$ or $Y=X^2$, $-1<X<1$. 

\noindent
\textbf{Extremal case 1}: For any $\epsilon>0$, there exist ranks $\{R_{1}, ..., R_{n}\}$, such that $|S_{n}(X, Y)| < \epsilon$ and $\xi_{n}(X, Y)>1-\epsilon$.

\begin{proof}
Without loss of generality, suppose $n$ is odd. We construct the following ranks 
$$
R_{i} = \begin{cases}
			n-2(i-1), & 1\leq i \leq (n+1)/2\\
            2i-(n+1), & (n+3)/2 \leq i \leq n.
		 \end{cases}
$$
It is straightforward to show that 
\begin{equation*}
\xi_{n}(X, Y) = 1-\frac{6n-9}{n^2-1}.
\end{equation*}
To derive $S_{n}(X, Y)$, first we have 
\begin{align*}
\sum_{i=1}^{n}(i-R_{i})^2 & = \sum_{i=1}^{(n+1)/2}(i-R_{i})^2 +  \sum_{i=(n+3)/2}^{n}(i-R_{i})^2 \\
& = \frac{(n-1)(n+1)(n+2)}{8} + \frac{n(n+1)(n-1)}{24} \\ 
& = \frac{(n-1)(n+1)(2n+3)}{12},
\end{align*}
then
\begin{equation*}
|S_{n}(X, Y)| = \frac{3}{2n}.
\end{equation*}
For any given $\epsilon>0$, we can find an odd number $n$, such that $(6n-9)/(n^2-1)<\epsilon$ and $3/(2n)<\epsilon$, therefore 
$|S_{n}(X, Y)| < \epsilon$ and $\xi_{n}(X, Y)>1-\epsilon$.
\end{proof}

Next we seek an opposite extremal case where $|S_{n}(X, Y)|$ is large but $\xi_{n}(X, Y)$ is small. This extremal case is not straightforward because when $|S_{n}(X, Y)| = 1$, $\{R_{1}, ..., R_{n}\}$ are monotonically increasing or decreasing, therefore $\xi_{n}(X, Y) = (n-2)/(n+1)$, which means when $|S_{n}(X, Y)|$ is close to 1, the minimum possible value of $\xi_{n}(X, Y)$ may not be close to 0. Mathematically, this can be formulated as the following optimization problem

For a given $0<\epsilon<1$, find 
$$\max_{\{R_{1}, ..., R_{n}\} \overset{\mathrm{perm}}{=} [n]}\left|1-\frac{6\sum_{i=1}^{n}(i-R_{i})^2}{n(n^2-1)}\right|,$$ 
where $\{R_{1}, ..., R_{n}\} \overset{\mathrm{perm}}{=} [n]$ represents that $\{R_{1}, ..., R_{n}\}$ is a permutation of $\{1, ..., n\}$, given the following inequality constraint
$$1-\frac{3\sum_{i=1}^{n-1}|R_{i+1}-R_{i}|}{n^2-1}<\epsilon.$$

Unless $n$ is small enough to enumerate all permutations of $\{R_{1}, ..., R_{n}\}$, the optimization problem above is difficult because of the complicated constraints. It may require advanced integer programming techniques, which is beyond the scope of this work. We leave the optimization problem as an open question and try to give a simple example instead, where $\xi_{n}(X, Y)$ is relatively small but $|S_{n}(X, Y)|$ is substantially larger. Our intuition is that Spearman's correlation measures the overall monotonic relations while Chatterjee's correlation is sensitive to the local changes. Accordingly we construct a case which is overall monotonic but with wiggly local patterns. 

\noindent
\textbf{Extremal case 2}: For any $\epsilon>0$, there exist ranks $\{R_{1}, ..., R_{n}\}$, such that $\xi_{n}(X, Y) = \epsilon + O(1/n)$ and $|S_{n}(X, Y)| = 1- \sqrt{2/27}(1-\epsilon)^{3/2} + O(1/n)$.

\begin{proof}
We construct $n = 2m+p$ ranks which can be partitioned into two parts, the part of $1\leq i\leq 2m$ has an oscillating pattern while the part of $2m+1\leq i\leq n$ is monotonically increasing
$$
R_{i} = \begin{cases}
			(i+1)/2, & 1\leq i \leq 2m \text{   and } i \text{ is odd}\\
            i/2 + m, & 1\leq i \leq 2m \text{   and } i \text{ is even}\\
            i, & 2m+1\leq i \leq n. 
		 \end{cases}
$$
Let $c=p/m$, the following results can be obtained
\begin{align*}
\xi_{n}(X, Y) & = 1- \frac{3[m^2+(m-1)^2+p]}{(2m+p)^2-1} \\
& = 1-\frac{6}{(c+2)^2} + O(1/n),
\end{align*}
\begin{align*}
|S_{n}(X, Y)| & = 1 - \frac{2m(m+1)(2m+1)-6m^2}{[(2m+p)^2-1](2m+p)} \\
& = 1 - \frac{4}{(c+2)^3} + O(1/n).
\end{align*}
For any $1>\epsilon>0$, there exists $c>0$ such that $\epsilon = 1-6/(c+2)^2$, therefore $\xi_{n}(X, Y) = \epsilon + O(1/n)$. By the same $c$, we have 
$|S_{n}(X, Y)| = 1- \sqrt{2/27}(1-\epsilon)^{3/2} + O(1/n)$.
\end{proof}
We give two examples for this extremal case (1) when $n = 100$, $m = 40$ and $p = 20$,  $\xi_{n}(X, Y) \approx 0.058$ while $S_{n}(X, Y) \approx 0.753$ (2) when $n = 130$, $m = 50$ and $p = 30$,  $\xi_{n}(X, Y) \approx 0.125$ while $S_{n}(X, Y) \approx 0.779$, both showing substantial difference between the two metrics.

\section{A new test for independence}
Motivated by the findings in Sections 2 and 3, we propose the following new metric 
\begin{equation*}
I_{n}(X, Y) = \max\{|S_{n}(X, Y)|, \sqrt{5/2}\xi_{n}(X, Y)\}.
\end{equation*}

As $I_{n}(X, Y)$ takes advantage of both $S_{n}(X, Y)$ and $\xi_{n}(X, Y)$, it can be used as a versatile test for detecting both monotonic and non-monotonic associations. Moreover, by Theorem 1, one can calculate the asymptotic p-value as follows
\begin{equation*}
P(\sqrt{n}I_{n}(X, Y) > z) \approx 1 - \Phi(z) \left[1-2\Phi(-z)\right],
\end{equation*}
where $z\geq 0$ and $\Phi(\cdot)$ represents the standard normal distribution function. 

We conducted two simulation studies to evaluate the performance of the proposed test. In the first study, we compared the empirical power of $S_{n}(X, Y)$, $\xi_{n}(X, Y)$ and $I_{n}(X, Y)$ under different sample sizes $\{20, 40, 60, 80, 100\}$. Generating $X$ from $\mbox{Uniform}[-1, 1]$, the following four alternatives were considered, where $\epsilon\perp X$ and $\epsilon\sim N(0,1)$:

\begin{itemize}
\item[1.] Linear: $Y = X+\epsilon$
\item[2.] Quadratic: $Y = X^2+0.3\epsilon$.
\item[3.] Sinusoid: $Y = \cos(2\pi X)+0.75\epsilon$.
\item[4.] Stepwise: $Y = \mathbbm{1}_{\{-1\leq X\leq -0.5\}}+2*\mathbbm{1}_{\{-0.5<X\leq0\}}+3*\mathbbm{1}_{\{0<X\leq 0.5\}}+4*\mathbbm{1}_{\{0.5<X\leq 1\}}+2\epsilon$.
\end{itemize} 

Figure 2 summarizes the empirical power over $5,000$ simulation runs (at the significance level of 0.05). As expected, Spearman's test has the highest power for the monotonic settings, i.e, linear and stepwise, but extremely low power for the other settings. Chatterjee's test is most powerful for two non-monotonic settings, i.e., quadratic and sinusoid, but it has much lower power for the monotonic settings. For instance, in the linear setting when $n=60$, Chatterjee's test has a power of 0.532, while the other two tests both have power higher than 0.98. The new test has satisfactory power for all settings, especially for linear and stepwise settings where the new test is comparable to Spearman's method.

\begin{center}
[Figure 2 about here]
\end{center}

In the second study, we examined the p-value bias. The exact p-value was approximated using $5,000$ permutations and the bias was computed as the asymptotic p-value minus the exact p-value. In each simulation run, we generated $X$ from $\mbox{Uniform}[-1, 1]$ and $Y$ from $N(0,1)$ with sample size $\{20, 40, 60, 80, 100\}$. Figure 3 summarizes the bias over $1,000$ simulations runs. It can be seen that the asymptotic p-values are overall close to the exact p-values. However, for a relatively small sample size, e.g., $n=20$, the asymptotic p-values is positively biased, indicating the conservativeness of the test. The bias vanishes as sample size increases. In practice, if the sample size is small, e.g., $n<30$, we recommend a permutation test based on $I_{n}(X, Y)$ to avoid power loss.

\begin{center}
[Figure 3 about here]
\end{center}

The proposed method was also tested on a transcriptomics dataset by Spellman et al. (1998), which contains the expression levels of 6,223 yeast genes over 23 successive time points during the cell cycle \cite{spellman}. This dataset was processed by \cite{mic}, where genes with missing observations were excluded. The processed dataset has 4,381 genes, which is available through R package \textit{minerva}. There have been many papers testing different correlation measures using this particular dataset including Chatterjee (2021). 

For all three methods, p-values were calculated using asymptotic formulas, and then adjusted by Benjamini-Hochberg procedure to control the false discovery rate (FDR) at the level of 0.05. Figure 4 summarizes the number of significant genes identified by three tests. Out of a total of 4,381 genes, the new test selected 734 genes whose expression levels change during the cell cycle, while the other two tests selected 619 and 385 genes respectively. This is due to the existence of different expression  patterns in the data, i.e., some genes have smoother expression change while others have non-monotonic such as oscillating expression change. Figure 5 presents a random sample of four genes that are identified by the new test but missed by Spearman's test. It can be seen that the expression levels of all four genes exhibit certain oscillating patterns. Figure 6 shows a random sample of four genes that are identified by the new test but missed by Chatterjee's test, where all genes have smoother expression change during the cell cycle. 

\begin{center}
[Figure 4 about here]
\end{center}

\begin{center}
[Figure 5 about here]
\end{center}

\begin{center}
[Figure 6 about here]
\end{center}

\section{Discussion and conclusions}
Chatterjee's rank correlation has attracted a lot of attention during the past two years due to its simplicity and nice statistical properties. However, the cost we pay for this simple method is its inferior performance in detecting smoother correlation patterns, such as linear relationships. To boost the power of this ingenious measure, in this paper, we proposed a max-type test by combining Chatterjee's correlation with Spearman's correlation, as the latter one is also rank based but sensitive to smooth correlation patterns. We derive the asymptotic joint distribution of these two correlations under independence, which enables analytical calculation of p-values. Our simulation study and the transcriptomics application illustrated the promise of the new test. Due to the simple calculation and satisfactory performance, the test is readily applicable to many correlative analyses, e.g., the gene-gene interaction and protein-protein interaction network construction.

There are several possible extensions of this work. First, the new test statistic $I_{n}(X, Y)$ is generally asymmetric because $\xi_{n}(X, Y)$ is asymmetric, i.e., $\xi_{n}(X, Y)\neq \xi_{n}(Y, X)$. When a symmetric measure is more suitable, one can consider the following modification
$$I^{sym}_{n}(X, Y) = \max\{S_{n}(X, Y), \xi_{n}(X, Y) , \xi_{n}(Y, X) \}.$$ 
In previous work \cite{Zhang23}, we established the asymptotic joint normality of $ \xi_{n}(X, Y)$ and $\xi_{n}(Y, X)$ and showed that the symmetrized metric, i.e., $\max\{ \xi_{n}(X, Y), \xi_{n}(Y, X)\}$, converges to a skew normal distribution under independence. The proof is based on Chatterjee's central limit theorem \cite{chatterjee.clt}. The joint asymptotic behavior of $\{S_{n}(X, Y), \xi_{n}(X, Y) , \xi_{n}(Y, X)\}$ could be studied in a similar way, and the first and most important step is to construct a valid interaction rule for $I^{sym}_{n}(X, Y)$ \cite{auddy, Zhang23}. However, this may require significant efforts and we leave it for future research. 

Second, as we discussed in the simulation study, the asymptotic p-value is generally close to the true p-value, but it tends to be positively biased for small sample, e.g., $n<30$, resulting in certain power loss. In the case of small sample, we recommend a permutation test for better testing performance. Another way to reduce the potential p-value bias is to use asymptotic expansion method, e.g., Edgeworth expansion, Cornish-Fisher expansion or saddle point approximation, which may improve p-value approximation by incorporating higher-order moments such as skewness and kurtosis.

\section{Appendix: Proof of Lemma 2}
\begin{proof}
We first define $F(y) = P(Y<y)$, $U_{i} = F(Y_{(i)})$, $F_{n}(y) = \sum_{i=1}^{n}\mathbbm{1}\{Y_{(i)}\leq y\}/n$, and $F_{n}(x) = \sum_{i=1}^{n}\mathbbm{1}\{U_{i}\leq x\}/n$. For Chatterjee's correlation, using Equations 5-8 in Angus (1995), we have 
\begin{equation*}
\frac{\sum_{i=1}^{n-1}|R_{i+1}-R_{i}|-n(n-1)/3}{\sqrt{n}(n-1)} = \frac{1}{\sqrt{n}}\sum_{i=1}^{n-1}\left[ |U_{i+1} - U_{i}| + 2U_{i}(1-U_{i})-\frac{2}{3} \right] + Z,
\end{equation*}
where $Z\xrightarrow{P}0$. For Spearman's correlation, we define the following function
\begin{equation*}
G_{n}(x) = \frac{1}{n}\sum_{i=1}^{n}\frac{2i}{n+1}\mathbbm{1}\{U_{i}\leq x\}.
\end{equation*}
Since
\begin{equation*}
\frac{1}{n}\sum_{i=1}^{n}\frac{2i}{n+1}\mathbbm{1}\{U_{i}\leq x\}\leq  \frac{1}{n}\sum_{i=1}^{n}\frac{2i}{n+1} = 1,
\end{equation*}
we have $0\leq G_{n}(x)\leq 1$. The expectation and variance of $G_{n}(x)$ are $E\left[G_{n}(x) \right]= x$ and 
\begin{equation*}
V\left[G_{n}(x)\right] = \frac{2x(1-x)(2n+1)}{3n(n+1)}\leq \frac{2n+1}{6n(n+1)} \rightarrow 0,
\end{equation*}
therefore $G_{n}(x)\xrightarrow{P}x$ for $x\in[0, 1]$, as $n\rightarrow \infty$. It is also noteworthy that
\begin{align*}
\frac{1}{n\sqrt{n}}\left(\sum_{i=1}^{n}\frac{2i}{n+1}R_{i} - \frac{n^2}{2}\right)& = \int \sqrt{n}\left[ F_{n}(x) - \frac{1}{2} \right] d G_{n}(x) \\
& =  \int \sqrt{n}\left[ F_{n}(x) - x \right] d G_{n}(x) + \int \sqrt{n}\left[ x - \frac{1}{2} \right] d G_{n}(x) 
\end{align*}
where the second term can be rewritten as
\begin{equation*}
\int \sqrt{n}\left[ x - \frac{1}{2} \right] d G_{n}(x) = \frac{1}{\sqrt{n}}\sum_{i=1}^{n}\frac{2i}{n+1}U_{i} - \frac{\sqrt{n}}{2}.
\end{equation*}
For the first term, using continuous mapping theorem, we have
\begin{align*}
 \int \sqrt{n}\left[ F_{n}(x) - x \right] d G_{n}(x) & \xrightarrow{d}  \int \sqrt{n}\left[ F_{n}(x) - x \right] dx \\
 & = \frac{\sqrt{n}}{2} - \frac{1}{\sqrt{n}}\sum_{i=1}^{n}U_{i},
\end{align*}
therefore
\begin{equation*}
\frac{1}{n\sqrt{n}}\left(\sum_{i=1}^{n}\frac{2i}{n+1}R_{i} - \frac{n^2}{2}\right) \xrightarrow{d} \sum_{i=1}^{n} \left[ \frac{2i}{n+1} -1 \right] U_{i}.
\end{equation*}
We will show that $\sum_{i=1}^{n-1}\left[ |U_{i+1} - U_{i}| + 2U_{i}(1-U_{i})-\frac{2}{3} \right]/\sqrt{n}$ and $\sum_{i=1}^{n} \left[ 2i/(n+1) -1 \right] U_{i}/\sqrt{n}$ are asymptotically joint normal.

For any two constants, $a$ and $b$, define
\begin{equation*}
Z_{i} =  a|U_{i+1} - U_{i}| + 2aU_{i}(C_{i}-U_{i}) -\frac{2a}{3}
\end{equation*}
and 
\begin{equation*}
W_{n} = \frac{1}{\sqrt{n}}\sum_{i=1}^{n-1}Z_{i},
\end{equation*}
where 
\begin{equation*}
C_{i} = 1+\frac{bi}{a(n+1)}-\frac{b}{2a}.
\end{equation*}
It can be seen that for any $j\geq 1$, $Z_{i+j}$ is independent of $[Z_{1}, ..., Z_{i}]$, therefore the sequence $\{Z_{i}\}$ is 1-dependent sequence. 
Similar to Equations (11)-(14) in \cite{angus}, we have
\begin{equation*}
\mbox{V}[Z_{i}] = \mbox{V}[a|U_{i+1} - U_{i}|] + \mbox{V}[2aU_{i}(C_{i}-U_{i})] + 2\mbox{Cov}[a|U_{i+1} - U_{i}|, 2aU_{i}(C_{i}-U_{i})],
\end{equation*}
where
\begin{equation*}
\mbox{V}[a|U_{i+1} - U_{i}|] = \frac{a^2}{18},
\end{equation*}
and
\begin{equation*}
\mbox{V}[2aU_{i}(C_{i}-U_{i})] = 4a^2\left[ \frac{C_{i}^2}{12} - \frac{C_{i}}{6} + \frac{4}{45} \right].
\end{equation*}
For the covariance term, it can be shown that $\mbox{Cov}[|U_{i+1} - U_{i}|, U_{i}] = 0$, therefore
\begin{align*}
2\mbox{Cov}[a|U_{i+1} - U_{i}|, 2aU_{i}(C_{i}-U_{i})] & = -4a^2\mbox{Cov}[|U_{i+1} - U_{i}|, U_{i}^2] \\
& = -\frac{a^2}{45},
\end{align*}
Summarizing the results above, we get
\begin{align*}
\mbox{V}[Z_{i}] & = \frac{a^2}{18} + 4a^2\left[ \frac{C_{i}^2}{12} - \frac{C_{i}}{6} + \frac{4}{45} \right] -\frac{a^2}{45}\\
 & = \frac{a^2}{18} + \frac{b^2}{12}\frac{(2i-n-1)^2}{(n+1)^2} \\
 & \geq \frac{a^2}{18} 
\end{align*}
For the covariance between $Z_{i}$ and $Z_{i+1}$, we have 
\begin{align*}
\mbox{Cov}[Z_{i}, Z_{i+1}] & = \mbox{Cov}[a|U_{i+1} - U_{i}| + 2aU_{i}(C_{i}-U_{i}), a|U_{i+2} - U_{i+1}| + 2aU_{i+1}(C_{i+1}-U_{i+1})] \\
 & = \mbox{Cov}[a|U_{i+1} - U_{i}|, a|U_{i+2} - U_{i+1}|] + \mbox{Cov}[2aU_{i}(C_{i}-U_{i}), 2aU_{i+1}(C_{i+1}-U_{i+1})]\\
 & = \frac{a^2}{180} - \frac{a^2}{90}\\
 & = -\frac{a^2}{180},
\end{align*}
therefore
\begin{align*}
\mbox{V}\left[\sum_{i=1}^{n-1}Z_{i}\right] & = \sum_{i=1}^{n-1}\mbox{V}(Z_{i}) + 2\sum_{i=1}^{n-1}\mbox{Cov}(Z_{i}, Z_{i+1}) \\
& \geq \frac{2(n-1)a^2}{45},
\end{align*}
and
$$\frac{\sqrt{\mbox{V}\left[\sum_{i=1}^{n-1}Z_{i}\right]}}{n^{1/3}}\rightarrow \infty$$ 
as $n\rightarrow\infty$. Using the central limit theorem for m-dependent random variables (Theorem 1 in \cite{angus}), $W_{n}$ converges in distribution to a normal distribution. Finally, by Cra\'mer-Wold device,
$\sqrt{n}S_{n}(X, Y)$ and $\sqrt{n}\xi_{n}(X, Y)$ are asymptotically joint normal. 
\end{proof}

 \section*{Competing Interests}
\noindent
The author has declared that no competing interests exist.

\newpage
\section*{Figures}
\begin{figure}[!htbp]
\begin{center}
\includegraphics[scale=0.7]{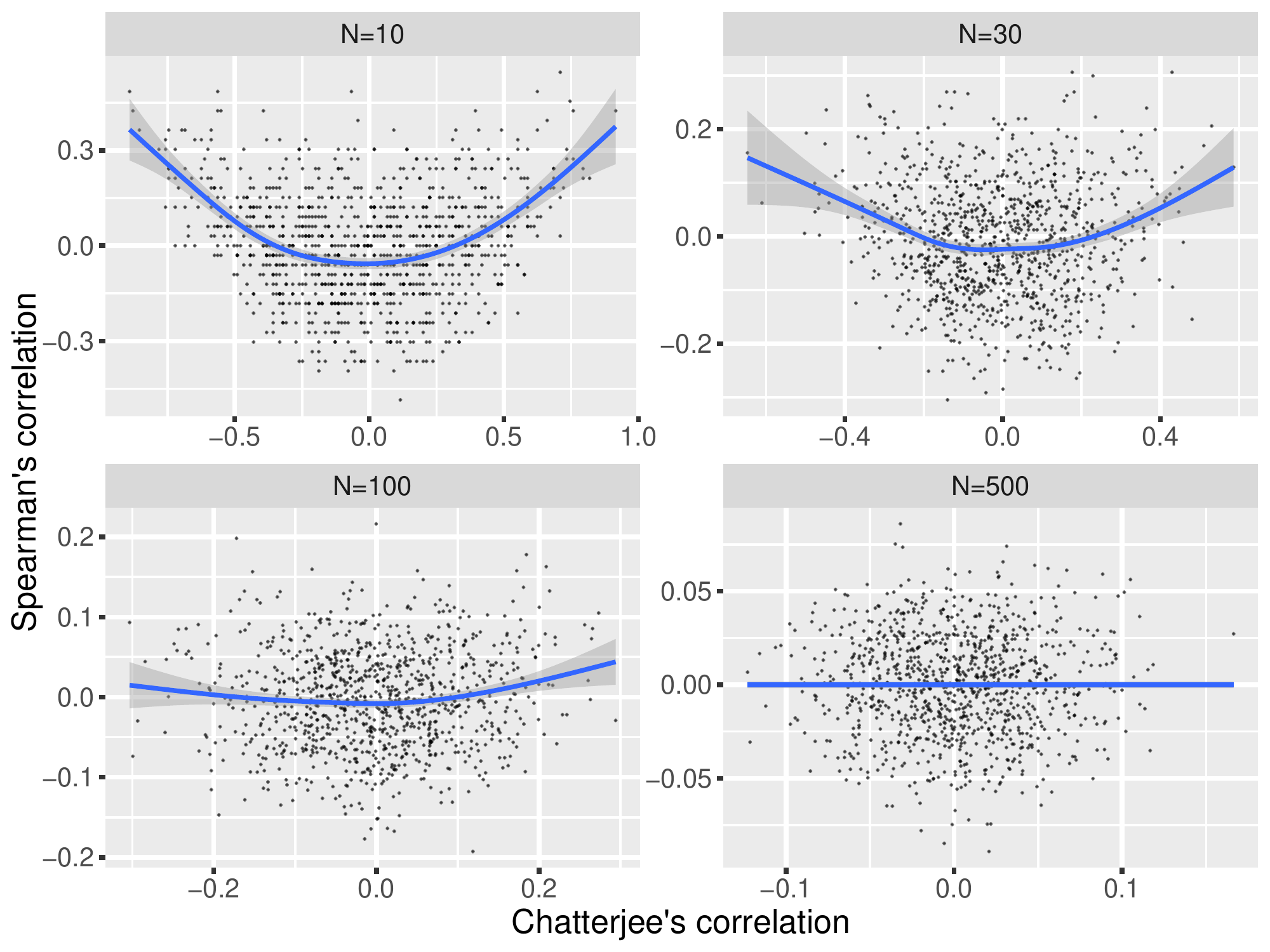}
\end{center}
\caption{Scatterplots of $S_{n}(X, Y)$ and $\xi_{n}(X, Y)$ under $n = 10, 30, 100, 500$.
}
\end{figure}

\newpage
\begin{figure}[!htbp]
\begin{center}
\includegraphics[scale=0.6]{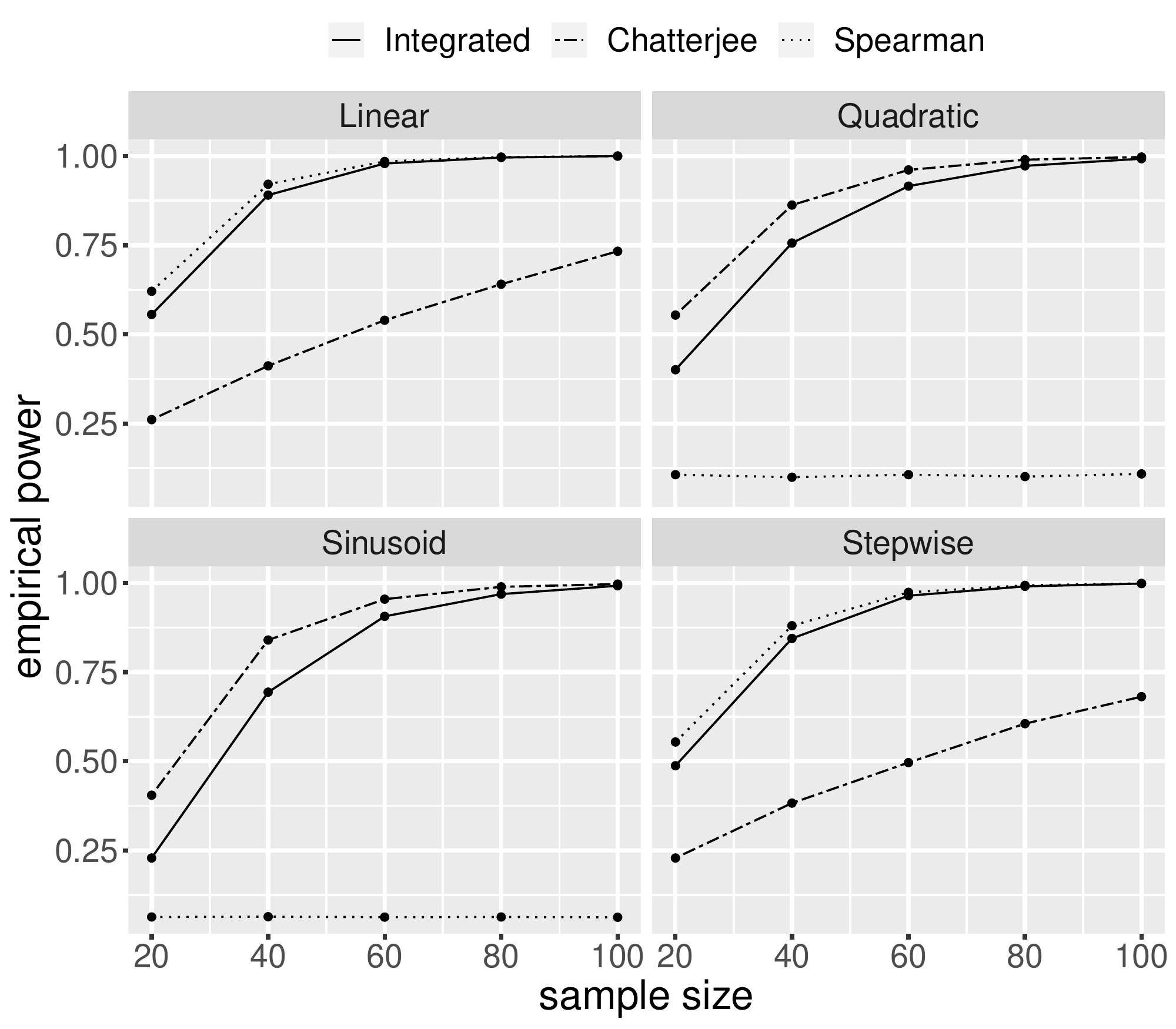}
\end{center}
\caption{Power comparison of Spearman's test, Chatterjee's test and the new test $I_{n}(X, Y)$ under different alternatives and sample sizes.
}
\end{figure}

\newpage
\begin{figure}[!htbp]
\begin{center}
\includegraphics[scale=0.5]{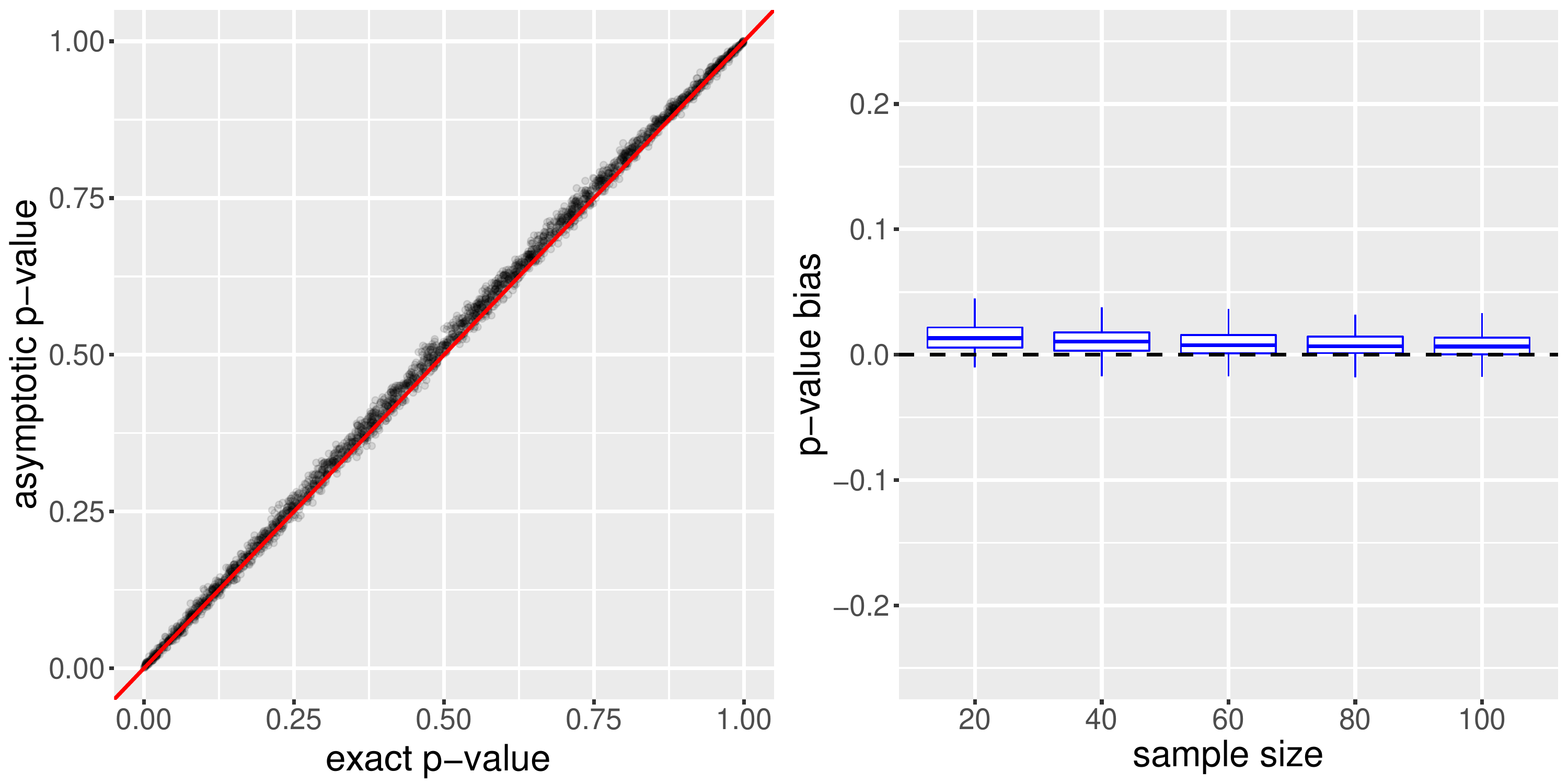}
\end{center}
\caption{Comparison of the asymptotic and exact p-values. Bias is computed as the asymptotic p-value minus the exact p-value.
}
\end{figure}

\newpage
\begin{figure}[!htbp]
\begin{center}
\includegraphics[scale=0.4]{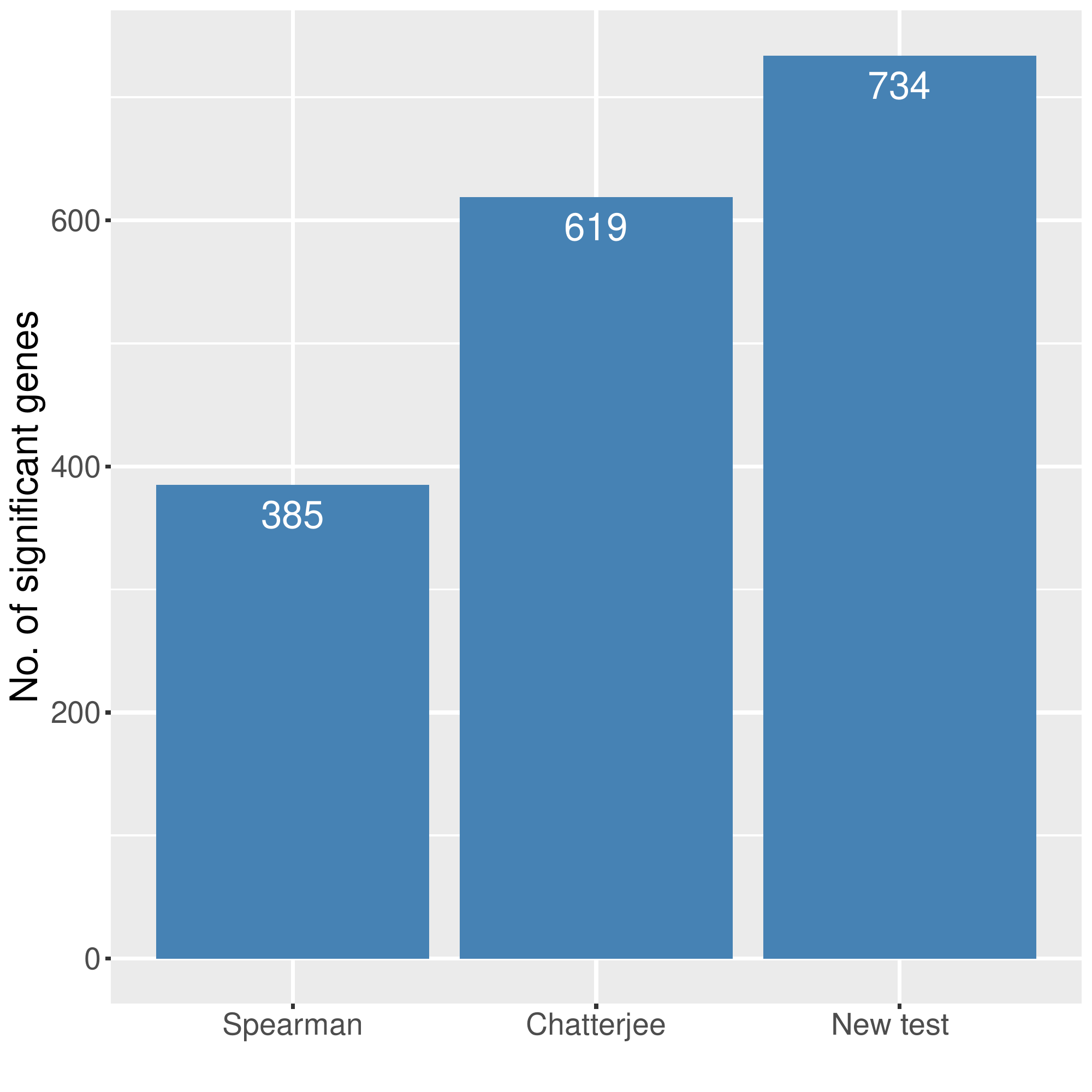}
\end{center}
\caption{Number of significant genes identified by three methods.
}
\end{figure}

\newpage
\begin{figure}[!htbp]
\begin{center}
\includegraphics[scale=0.65]{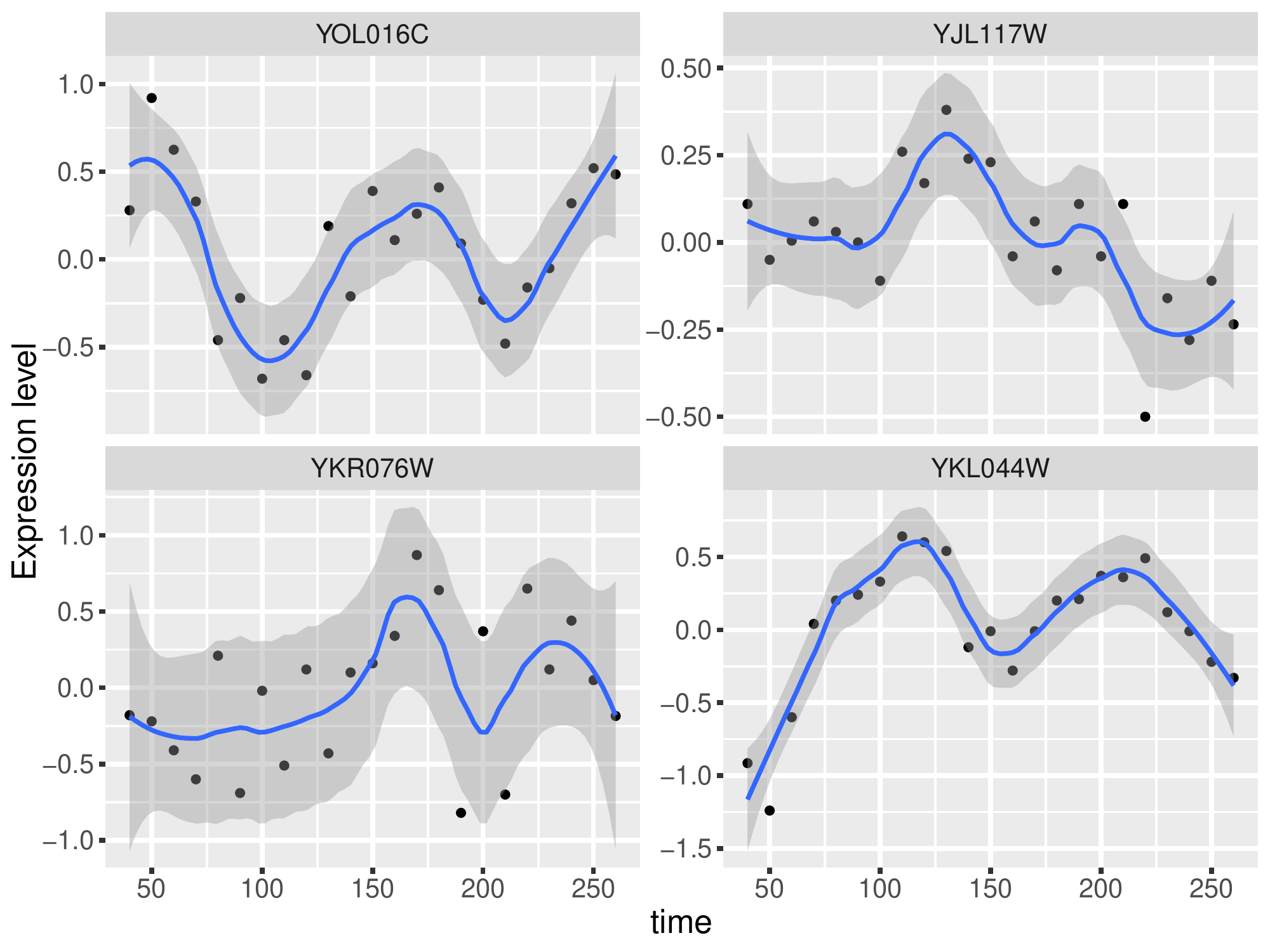}
\end{center}
\caption{A random sample of 4 genes selected by the new test but missed by Spearman's test.
}
\end{figure}

\newpage
\begin{figure}[!htbp]
\begin{center}
\includegraphics[scale=0.65]{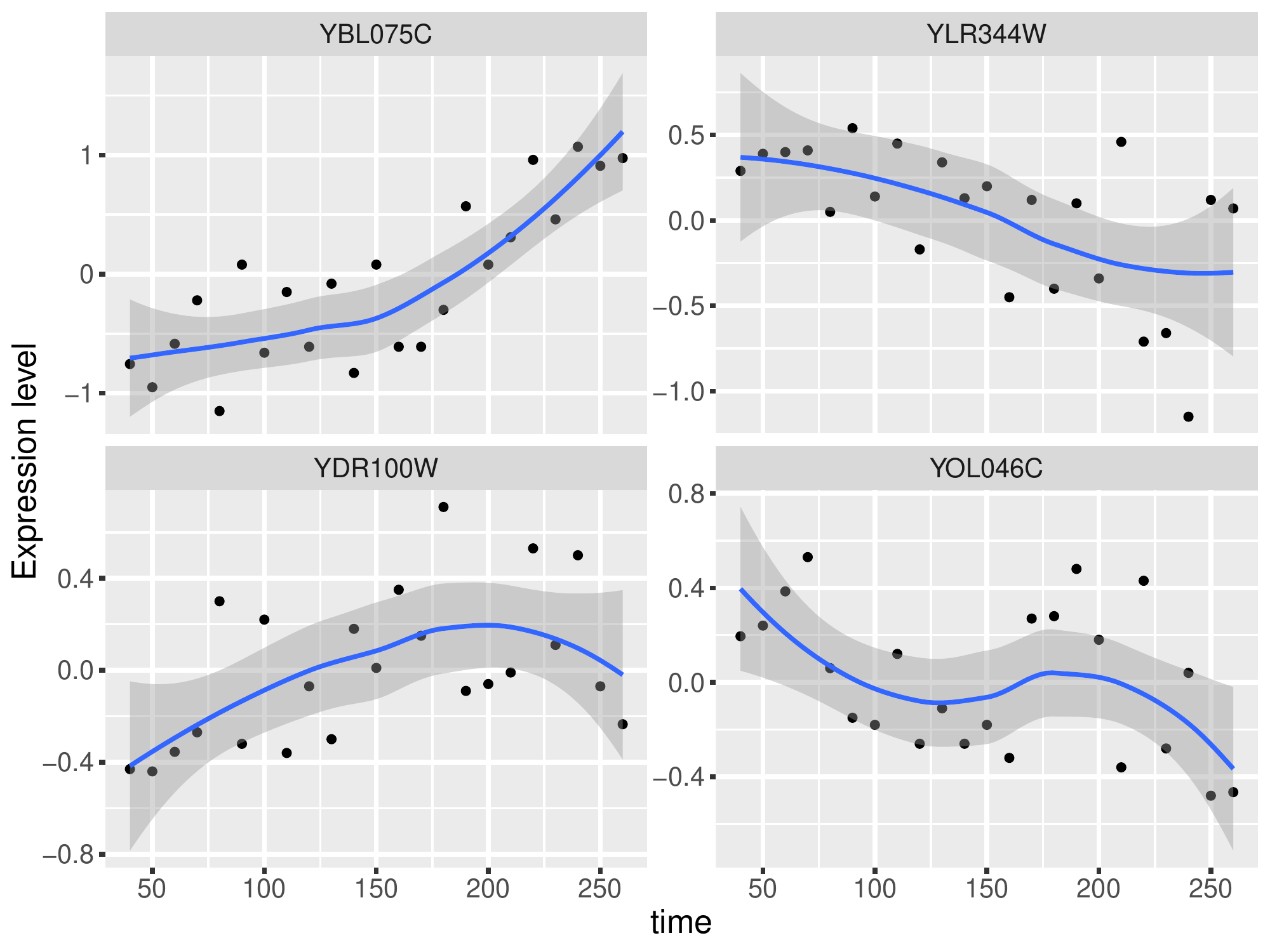}
\end{center}
\caption{A random sample of 4 genes selected by the new test but missed by Chatterjee's test.
}
\end{figure}
\end{document}